\documentclass[pra,aps,showpacs,twocolumn,superscriptaddress]{revtex4}
%\documentclass[12pt]{article}

% part only for tex. There are 9 parts of preliminary abbreviations.

%1=general environment

\usepackage{amsmath,amsfonts,amssymb,caption,hyperref,color,epsfig,graphics,subfigure,graphicx,latexsym,mathrsfs,revsymb,theorem,url,verbatim,epstopdf}
\allowdisplaybreaks[3]

\hypersetup{colorlinks,linkcolor={blue},citecolor={blue},urlcolor={red}}

\usepackage{hyperref}

\makeatletter

\newcommand{\Rmnum}[1]{\expandafter\@slowromancap\romannumeral #1@}
\makeatother

\newtheorem{Lemma}{Lemma}

\newtheorem{Theorem}{Theorem}
\newtheorem{Corollary}{Corollary}

\def\squareforqed{\hbox{\rlap{$\sqcap$}$\sqcup$}}
\def\qed{\ifmmode\squareforqed\else{\unskip\nobreak\hfil
		\penalty50\hskip1em\null\nobreak\hfil\squareforqed
		\parfillskip=0pt\finalhyphendemerits=0\endgraf}\fi}
\def\endenv{\ifmmode\;\else{\unskip\nobreak\hfil
		\penalty50\hskip1em\null\nobreak\hfil\;
		\parfillskip=0pt\finalhyphendemerits=0\endgraf}\fi}
% unavailable for beamer:
\newenvironment{proof}{\noindent \textbf{{Proof.~} }}{\qed}
\def\Dbar{\leavevmode\lower.6ex\hbox to 0pt
	{\hskip-.23ex\accent"16\hss}D}
% Define a new 'leo' style for the package that will use a smaller font.
\makeatletter
\def\url@leostyle{%
	\@ifundefined{selectfont}{\def\UrlFont{\sf}}{\def\UrlFont{\small\ttfamily}}}
\makeatother
% Now actually use the newly defined style.
\urlstyle{leo}

\def\btb{\begin{tabular}}
	\def\etb{\end{tabular}}
\newcommand{\bra}[1]{\langle#1|}
\newcommand{\ket}[1]{|#1\rangle}

\newcommand{\tr}{\mathrm{tr}}

%% \author {{ Dragomir {\v{Z} \Dbar}okovi{\'c}}}

%% \affiliation{Department of Pure Mathematics and Institute for
%% Quantum Computing, University of Waterloo, Waterloo, Ontario, %% N2L 3G1, Canada} \email{djokovic@uwaterloo.ca}

\begin{document}
	\title{Entanglement Polygon Inequalities for A Class of Mixed States}
	
	\author{Xian Shi}\email[]
	{shixian01@gmail.com}
\affiliation{College of Information Science and Technology, Beijing University of Chemical Technology, Beijing 100029, China}
	%\affiliation{Institute of Mathematics, Academy of Mathematics and Systems Science, Chinese Academy of Sciences, Beijing, 100190, China}
	%\author{Lin Chen}\email[]{linchen@buaa.edu.cn (corresponding author)}
	%\affiliation{School of Mathematical Sciences, Beihang University, Beijing 100191, China}
	%\affiliation{International Research Institute for Multidisciplinary Science, Beihang University, Beijing 100191, China}
	
	%\author{Yi Shen}\email[]
	%{yishen@buaa.edu.cn}
	%\affiliation{School of Mathematics and Systems Science, Beihang University, Beijing 100191, China}
	%
	%\author{Yize Sun}
	%\affiliation{School of Mathematics and Systems Science, Beihang University, Beijing 100191, China}
	
	%\author{Lijun Zhao}
	%\affiliation{School of Mathematics and Systems Science, Beihang University, Beijing 100191, China}
	
	%\author{Yumin Guo}
	%\affiliation{School of Mathematical Sciences, Capital Normal University, Beijing 100048, China}
	
	\date{\today}
	
	\pacs{03.65.Ud, 03.67.Mn}

\begin{abstract}
\indent The study on the entanglement polygon inequality of multipartite systems has attracted much attention. However, most of the results are on pure states. Here we consider the property for a class of mixed states, which are the reduced density matrices of generalized W-class states in multipartite higher dimensional systems. First we show the class of mixed states satisfies the entanglement polygon inequalities in terms of Tsallis-q entanglement, then we propose a class of tighter inequalities for mixed states in terms of Tsallis-q entanglement. At last, we get an inequality for the mixed states, which can be regarded as a relation for bipartite entanglement.
\end{abstract}

\maketitle
\section{introduction}
Entanglement is an essential feature in quantum mechanics compared with the classical \cite{horodecki2009quantum,plenio2014introduction}. It serves a key role in many information tasks, such as quantum teleportation \cite{bennett1993teleporting,bouwmeester1997experimental}, quantum dense coding \cite{harrow2004superdense}, and quantum cryptography \cite{ekert1991quantum,mirhosseini2015high}.

One of the most meaningful properties of entanglement in multipartite systems is that entanglement cannot be freely shared. Monogamy of entanglement (MoE) \cite{coffman2000distributed,zong2022monogamy} is one manifestation of the property. For a tripartite entangled state $\rho_{ABC}$, the mathematical representation of MoE in terms of an entanglement measure $E$ is that 
\begin{align}
E_{A|B}+E_{A|C}\le E_{A|BC}.\label{f1}
\end{align}
The above inequality means that the sum of entanglement in $AB$ and $AC$ is bounded by entanglement between subsystem $A$ and the remaining $BC$. This relation is valid in terms of concurrence \cite{coffman2000distributed}, the squashed entanglement measure \cite{christandl2004squashed}, and Tsallis-$q$ entanglement \cite{kim2010tsallis,luo2016general} for $n$-qubit systems. Except (\ref{f1}), other representations of MoE have been proposed \cite{gour2018monogamy,guo2020multipartite,shi2021multilinear,jin2022new,guo2023complete}. 

Recently, the other entanglement distribution property, entanglement polygon inequality (EPI), was proposed for pure states in multiqubit systems in terms of some entanglement measure $E$ \cite{qian2018entanglement},
\begin{align*}
E_{j|\overline{j}}\le \sum_{k\ne j}E_{k|\overline{k}},
\end{align*}
here $j$ denotes a subsystem $\mathcal{H}_j$, $\overline{j}$ is the remaining subsystems. Recently, EPI was proved valid for pure states in multiqudit systems in terms of some entanglement measures \cite{yang2022entanglement,shi2023entanglement}. Based on the property, a method to build the genuine entanglement measure for multipartite systems was proposed \cite{xie2021triangle}. However, there are few results on EPI for mixed states of multipartite systems in terms of some entanglement measure.

The generalized $W$ class (GW) states of multipartite higher dimensional systems own fine properties on their entanglement distribution. In 2008, Kim and Sanders showed that the GW states in multipartite higher dimensional systems satisfy the monogamy relations in terms of the squared concurrence \cite{san2008generalized}. Moreover, there the authors showed that inequality (\ref{f1}) is saturated for the GW states in terms of the squared concurrence. In recent years, there have been many results on monogamy of entanglement for the GW states in terms of convex roof extended negativity \cite{choi2015negativity}, Tsallis-$q$ entanglement \cite{shi2020monogamy}, Renyi-$\alpha$ entanglement \cite{lai2021tighter}, and the unified-$(q,s)$ entanglement \cite{li2024monogamy}.

 This article is organized as follows. In section \ref{sec2}, we present the preliminary knowledge of this article. In section \ref{sec3}, we present our main results. First, we present a class of tripartite mixed states satisfy the EPI in terms of Tsallis-$q$ entanglement, which are generated by the partial trace map over a generalized $W$-class state in terms of Tsallis-$q$ entanglement, and then we show a tighter inequality for the mixed states. At last, we present another type of inequalities for the mixed states. In section \ref{sec 4}, we end with a summary. 
 
\section{Preliminary Knowledge}\label{sec2}

In this section, we present the preliminary knowledge needed here. First, we present the definitions and properties of entanglement measures for bipartite systems, and then we recall the generalized $W$-class states in multipartite higher dimensional systems. 
\subsection{Entanglement measures}
\indent Given a bipartite pure state $\ket{\psi}_{AB}=\sum_i\sqrt{\lambda_i}\ket{ii},$
the concurrence is defined as
\begin{align}
C(\ket{\psi}_{AB})=\sqrt{2(1-\tr\rho_A^2)},\label{p1}
\end{align}
where $\rho_A=\tr_B\rho_{AB}$. When $\rho_{AB}$ is a mixed state, its concurrence is defined as
\begin{align}
C(\rho_{AB})=\min_{\{p_i,\ket{\psi_i}_{AB}\}}\sum_i p_iC(\ket{\psi_i}_{AB}),
\end{align}
where the minimum takes over all the decompositions of $\rho_{AB}=\sum_i p_i\ket{\psi_i}_{AB}\bra{\psi_i}.$ 

For a quantum state $\rho$, the Tsallis-$q$ entropy is 
\begin{align*}
T_q(\rho)=\frac{1}{q-1}[1-\tr\rho^q],
\end{align*}
for $q\ge 1.$  When $q\rightarrow 1$, $T_q(\cdot)$ converges to the von Neumann entropy $S(\cdot)$ 
\begin{align}
\lim\limits_{q\rightarrow1}T_q(\rho)=-\tr\rho\log\rho=S(\rho).\label{f2}
\end{align}

For a bipartite pure state $\ket{\psi}_{AB}$, when $q\ge 1$, its Tsallis-$q$ entanglement is 
\begin{align*}
\mathbf{T}_q(\ket{\psi}_{AB})=T_q(\rho_A),
\end{align*}
where $\rho_A=\tr_B\ket{\psi}_{AB}\bra{\psi}$. For a mixed state $\rho_{AB}$, its Tsallis-$q$ entanglement is defined as 
\begin{align*}
\mathbf{T}_q(\rho_{AB})=\min\sum_i p_i \mathbf{T}_q(\ket{\psi_i}_{AB}),
\end{align*}
where the minimum takes over all possible pure state decompositions of $\rho_{AB}=\sum_i p_i\ket{\psi_i}_{AB}\bra{\psi_i}.$

By combing (\ref{f2}), we have 

\begin{align*}
\lim\limits_{q\rightarrow 1} \mathbf{T}_q(\rho_{AB})=E(\rho_{AB}),
\end{align*}
where $E(\rho_{AB})$ is entanglement of formation for the mixed state $\rho_{AB}$ \cite{Bennett1996Concentrating,Bennett1996mixed}.

Next assume $\ket{\phi}$ is a pure state in $\mathcal{H}_2\otimes\mathcal{H}_d$ with its Schmidt decomposition $\ket{\phi}_{AB}=\sqrt{\lambda}\ket{00}+\sqrt{1-\lambda}\ket{11},$ its Tsallis-$q$ entanglement is 
\begin{align*}
\mathbf{T}_q(\ket{\phi}_{AB})=\frac{1}{(1-q)}[\lambda^q+(1-\lambda)^q-1].
\end{align*}
Combing the definition of concurrence for pure states, we have
\begin{align*}
\mathbf{T}_q(\ket{\phi}_{AB})=f_{q}(C^2(\ket{\phi_{AB}})),
\end{align*}
where $f_{q}(x)$ is defined in $x\in[0,1]$,
\begin{align*}
f_{q}(x)=\frac{((1+\sqrt{1-x})^q+(1-\sqrt{1-x})^q)-2^{q}}{(1-q)2^{q}}.
\end{align*}

\indent Next we recall several lemmas on the properties of $f_q(x)$.
\begin{Lemma}\label{l1}\cite{kim2010tsallis}
The function $f_q(x)$ is a monotonously increasing and concave function when $q\in [\frac{5-\sqrt{13}}{2},2]\cup[3,\frac{5+\sqrt{13}}{2}]$.
\end{Lemma}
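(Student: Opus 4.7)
The plan is to verify both properties directly from the closed form of $f_q(x)$ via the substitution $u = \sqrt{1-x} \in [0,1]$, which simplifies it to
$$f_q(x) = \frac{(1+u)^q + (1-u)^q - 2^q}{(1-q)\,2^q}.$$
Under this substitution $du/dx = -1/(2u)$, and every $x$-derivative reduces by the chain rule to an explicit symmetric combination of $(1 \pm u)^{q-1}$ and $(1 \pm u)^{q-2}$. The argument splits naturally according to the sign of $1 - q$, and the degenerate value $q = 1$ is handled by continuity using the limit $\lim_{q \to 1} \mathbf{T}_q = E$ recorded in equation~(\ref{f2}).

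For monotonicity, one computes
$$\frac{df_q}{dx} = \frac{q\,\bigl[(1+u)^{q-1} - (1-u)^{q-1}\bigr]}{(q-1)\,2^{q+1}\,u}.$$
Since $t \mapsto t^{q-1}$ is increasing on $[0,2]$ for $q > 1$ and decreasing for $q \in (0,1)$, the bracket and the factor $q-1$ always have the same sign, so the quotient is non-negative on $u \in (0,1]$ for every admissible $q$. Thus $f_q'(x) \geq 0$ on $[0,1]$ with no restriction on $q$, and the interval constraint in the lemma is entirely a concavity phenomenon.

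For concavity, a second differentiation produces
$$\frac{d^2 f_q}{dx^2} = -\frac{q}{(q-1)\,2^{q+2}\,u^3}\,\Psi_q(u),$$
where $\Psi_q(u) := (q-1)u\bigl[(1+u)^{q-2} + (1-u)^{q-2}\bigr] - \bigl[(1+u)^{q-1} - (1-u)^{q-1}\bigr]$, so concavity becomes the sign condition $\Psi_q(u)/(q-1) \geq 0$ on $(0,1)$. The key identity
$$(1+u)^{q-1} - (1-u)^{q-1} = (q-1)\!\int_{-u}^{u}\!(1+t)^{q-2}\,dt$$
lets me interpret $\Psi_q(u)/(q-1)$ as the trapezoidal-rule error for the integrand $\varphi(t) = (1+t)^{q-2}$ on $[-u,u]$. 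Because $\varphi''(t) = (q-2)(q-3)(1+t)^{q-4}$, the function $\varphi$ is convex on $(-1,1)$ exactly when $(q-2)(q-3) \geq 0$, a condition satisfied on the entire stated range $[\tfrac{5-\sqrt{13}}{2},2]\cup[3,\tfrac{5+\sqrt{13}}{2}]$. For convex $\varphi$ the trapezoidal rule over-estimates the integral, i.e.\ $u\bigl[\varphi(-u) + \varphi(u)\bigr] \geq \int_{-u}^{u} \varphi(t)\,dt$; combining this with the sign of $(q-1)$ in each of the two sub-cases $q > 1$ and $q < 1$ yields $\Psi_q(u)/(q-1) \geq 0$, as required.

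The main obstacle I anticipate is not conceptual but bookkeeping: one must carry the sign of $(q-1)$ correctly when multiplying inequalities and separately handle the degenerate boundary values $q \in \{2,3\}$, where several factors of $\Psi_q$ vanish simultaneously. These are in fact the easiest cases: $f_2(x) = x/2$ is affine (hence trivially concave and monotone), and $f_3$ admits a similarly elementary polynomial form. A secondary technical point is the behaviour of $\Psi_q$ near $u = 1$ for $q \in (\tfrac{5-\sqrt{13}}{2}, 1)$, where $(1-u)^{q-2}$ diverges; here one verifies by direct asymptotic expansion that the divergent terms still have the sign demanded by the trapezoidal-rule argument, so the inequality $\Psi_q(u)/(q-1) \geq 0$ extends continuously up to the endpoint.
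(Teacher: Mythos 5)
The paper does not actually prove this lemma; it is imported verbatim as a citation to Kim's 2010 work, so your self-contained derivation is doing genuinely new work relative to the manuscript, and it checks out. Your derivative formulas are correct: with $u=\sqrt{1-x}$ one indeed gets $f_q'(x)=\frac{q[(1+u)^{q-1}-(1-u)^{q-1}]}{(q-1)2^{q+1}u}\ge 0$ for all $q>0$ since the bracket and $q-1$ share a sign, and the second derivative reduces to the sign of $\Psi_q(u)/(q-1)=u[\varphi(u)+\varphi(-u)]-\int_{-u}^{u}\varphi(t)\,dt$ with $\varphi(t)=(1+t)^{q-2}$ --- note that once you divide by $q-1$ the case split on its sign that you announce is actually unnecessary, since the factor cancels identically via your integral identity. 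The trapezoidal-rule interpretation then cleanly isolates the true concavity condition $(q-2)(q-3)\ge 0$, which covers the stated range $[\frac{5-\sqrt{13}}{2},2]\cup[3,\frac{5+\sqrt{13}}{2}]$ (and in fact shows concavity fails for $q\in(2,3)$, where $f_q$ becomes convex, consistent with $f_2(x)=x/2$ and $f_3(x)=3x/8$ both being affine). Your worry about the singularity of $(1-u)^{q-2}$ at $u=1$ for $q<2$ can be dispatched more simply than by asymptotic expansion: the trapezoid inequality holds on every compact $[-u,u]\subset(-1,1)$, so $f_q''\le 0$ on the open interval $x\in(0,1)$, and concavity extends to $[0,1]$ by continuity of $f_q$ at the endpoints. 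What your argument buys over the paper's citation is transparency about which hypothesis does what: monotonicity is free for all $q>0$, concavity is exactly the condition $q\le 2$ or $q\ge 3$, and the outer bounds $\frac{5\pm\sqrt{13}}{2}$ are not needed for this lemma at all --- they enter only through the companion Lemma 2 on convexity of $f_q(x^2)$ in $x$.
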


\begin{Lemma}\cite{luo2016general}
	The function $f_q(x^2)$ is a monotonically increasing function of the variable $x$ for $q\in (0,\infty)$ and $x\in(0,1)$, it is a convex function of $x$ when $q\in[\frac{5-\sqrt{13}}{2},\frac{5+\sqrt{13}}{2}].$
\end{Lemma}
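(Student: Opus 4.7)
I treat monotonicity and convexity of $h_q(x):=f_q(x^2)$ in sequence. For monotonicity, the chain rule gives $h_q'(x)=2x\,f_q'(x^2)$, so on $(0,1)$ the sign of $h_q'$ matches that of $f_q'$. Setting $t=\sqrt{1-x}\in(0,1)$ rewrites
\begin{align*}
f_q(x)=\frac{(1+t)^q+(1-t)^q-2^q}{(1-q)\,2^q},
\end{align*}
and differentiating in $x$ yields an expression proportional to $[(1+t)^{q-1}-(1-t)^{q-1}]/[(1-q)\,t]$. The numerator and the factor $1-q$ share the same sign for both $q>1$ and $0<q<1$, so the quotient is positive throughout and $h_q'>0$ on $(0,1)$ for every $q>0$; the boundary case $q=1$ follows from (\ref{f2}), the binary entropy being strictly increasing on $(0,1)$.

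For convexity I would eliminate the nested radical via the substitution $x=\sin\theta$, $\theta\in(0,\pi/2)$. The half-angle identities $1\pm\cos\theta=2\cos^2(\theta/2),\,2\sin^2(\theta/2)$ collapse $f_q(x^2)$ to
\begin{align*}
f_q(\sin^2\theta)=\frac{\cos^{2q}(\theta/2)+\sin^{2q}(\theta/2)-1}{1-q}=:G_q(\theta).
\end{align*}
Convexity of $h_q$ in $x$ then reads $h_q''(x)=G_q''(\theta)\sec^2\theta+G_q'(\theta)\sec^2\theta\tan\theta\ge 0$. A Taylor expansion around $x=1$ already pins the endpoint: expanding $(1\pm u)^q$ through fourth order with $u=\sin(\pi/2-\theta)$ and using $1-x=\frac{1}{2}(\pi/2-\theta)^2+\cdots$, the coefficient of $(1-x)^2$ in $h_q(x)$ simplifies to a positive multiple of $-q(q^2-5q+3)$, which is non-negative iff $q^2-5q+3\le 0$, i.e., iff $q\in[\frac{5-\sqrt{13}}{2},\frac{5+\sqrt{13}}{2}]$. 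This identifies the stated interval as necessary; for sufficiency I would then verify $h_q''(x)\ge 0$ at every interior $x\in(0,1)$ by reducing the inequality to an algebraic condition in $(q,s)$ with $s=\sin^2(\theta/2)$ and checking that the $s$-dependent factor imposes no stronger constraint on $q$.

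The main obstacle is the sufficiency half. After the trigonometric substitution the convexity inequality becomes a polynomial in $s$ with $q$-dependent coefficients, and one must factor out the quadratic $q^2-5q+3$ (whose sign flips precisely at the endpoints) and argue that the residual $s$-dependent factor is non-negative on $(0,1)$. A possible alternative is to express $h_q$ on the admissible $q$-interval as a positive combination of manifestly convex functions of $x$, but the direct second-derivative computation after the trigonometric substitution appears the cleanest path.
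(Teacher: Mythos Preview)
The paper does not supply its own proof of this lemma; it is quoted verbatim from \cite{luo2016general} as a known fact, so there is nothing in the present paper to compare your argument against. I can only assess your proposal on its own merits.

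Your monotonicity argument is correct and complete: the chain rule reduces the question to the sign of $f_q'$, and your case split on $q\gtrless 1$ correctly shows that the numerator $(1+t)^{q-1}-(1-t)^{q-1}$ and the factor $q-1$ always share the same sign, with the $q=1$ limit handled by the binary entropy.

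For convexity, your Taylor expansion at $x=1$ is carried out correctly and cleanly isolates the quadratic $q^2-5q+3$, giving a genuine proof that the stated interval is \emph{necessary}. The gap is sufficiency. You acknowledge this yourself, but the sketch you offer is not convincing as written: there is no reason to expect the factor $q^2-5q+3$ to split off from $h_q''(x)$ at a generic interior point $x\in(0,1)$---that polynomial in $q$ arises specifically from the fourth-order term of the expansion at the endpoint. At interior $x$ the second derivative mixes all orders, and the sign analysis in $(q,s)$ is a two-variable problem that does not reduce to a single quadratic constraint in $q$ times a manifestly positive $s$-factor. So the plan ``factor out $q^2-5q+3$ and check the residual'' needs either an explicit computation showing such a factorization (unlikely) or a different global argument. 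As it stands, the convexity half of the lemma remains unproved in your proposal.
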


\subsection{Generalized W-class State in Multipartite Higher Dimensional Systems}
 Here we first recall the generalized W-class (GW) state $\ket{W_n^d}_{A_1A_2\cdots A_n}$ in multi-qudit systems \cite{san2008generalized},
\begin{align}
&\ket{W_n^d}_{A_1A_2\cdots A_n}\nonumber\\
=&\sum_{j=1}^{d-1}(a_{1j}\ket{j0\cdots0}+a_{2j}\ket{0j\cdots0}+\cdots+a_{nj}\ket{00\cdots0j}),\label{f3}
\end{align}
here $\{\ket{j}_{A_i}\}_{j=0}^{d-1}$ is an orthonormal basis of qudit subsystems $A_i$ with $i=1,2,\cdots,n$ and $\sum_{i=1}^n\sum_{j=1}^{d-1}|a_{ij}|^2=1$.

\begin{Lemma}\label{l4}\cite{san2008generalized}
	For any n-qudit generalized W-class state $\ket{\psi}_{AB_{1}\cdots B_{n-1}}$ in ($\ref{f3}$) and a partition $P=\{P_1,\cdots,P_m\}$ for the subset $S=\{A, B_{1}, \cdots, B_{l-1}\},$ $m\le l\le n,$
	\begin{align}
	C^2_{P_s|\overline{P_s}}=\sum_{k\ne s}C^2_{P_s|P_k}=\sum_{k\ne s}(C^a_{P_s|P_k})^2,\label{c1}
	\end{align}
	for all $k\ne s$, here we denote $\{P_1,\cdots, \overline{P_s},\cdots P_m\}=\{P_1,\cdots, {P_s},\cdots, P_m\}-\{P_s\}$  
\end{Lemma}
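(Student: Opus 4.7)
The plan is to reduce every concurrence in the stated chain to the same explicit quantity in the ``single-excitation weights'' $p_X:=\sum_{i\in X,j}\abs{a_{ij}}^2$, exploiting the recursive structure of GW states under the partial trace. First, I would group the terms in (\ref{f3}) according to whether the excited party lies in $S$ or in $\overline S$, writing
\begin{align*}
\ket{\psi}_{AB_1\cdots B_{n-1}}=\ket{\alpha_S}_S\ket{\mathbf 0}_{\overline S}+\ket{\mathbf 0}_S\ket{\alpha_{\overline S}}_{\overline S},
\end{align*}
where $\ket{\alpha_X}_X=\sum_{i\in X,j}a_{ij}\ket{e_i^{(j)}}_X$ collects the single-excitation terms on $X$ and $\|\alpha_X\|^2=p_X$. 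Since the $j$-sum in (\ref{f3}) starts at $1$, each $\ket{\alpha_X}$ is orthogonal to $\ket{\mathbf 0}_X$, so tracing out $\overline S$ yields the rank-two operator $\rho_S=\ketbra{\alpha_S}{\alpha_S}+(1-p_S)\proj{\mathbf 0}_S$.

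Next, I would apply the same grouping to $\ket{\alpha_S}$ across any bipartition $P_s\mid\overline{P_s}$ of $S$, writing $\ket{\alpha_S}=\ket{\alpha_{P_s}}_{P_s}\ket{\mathbf 0}_{\overline{P_s}}+\ket{\mathbf 0}_{P_s}\ket{\alpha_{\overline{P_s}}}_{\overline{P_s}}$. The local isometry $\ket{\mathbf 0}_X\mapsto\ket{0}_{\mathrm{qb}}$, $\ket{u_X}_X\mapsto\ket{1}_{\mathrm{qb}}$, with $\ket{u_X}:=\ket{\alpha_X}/\sqrt{p_X}$, carries $\rho_S$ to the effective two-qubit $X$-state whose diagonal in the standard basis is $(1-p_S,\,p_{\overline{P_s}},\,p_{P_s},\,0)$ and whose single off-diagonal entry is $\sqrt{p_{P_s}p_{\overline{P_s}}}$. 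Because local isometries preserve Schmidt coefficients, and hence pure-state concurrences, the qudit $C_{P_s\mid\overline{P_s}}(\rho_S)$ and $C^a_{P_s\mid\overline{P_s}}(\rho_S)$ coincide with their qubit images; the identical reduction applied to $\rho_{P_s\cup P_k}$ produces the two-qubit $X$-state with weights $p_{P_s}$ and $p_{P_k}$. Feeding each $X$-state into Wootters' formula leaves $\rho\tilde\rho$ with a single nonzero eigenvalue $4p_{P_s}p_{P_k}$ (resp.\ $4p_{P_s}p_{\overline{P_s}}$), so both $C$ and $C^a$ collapse to $2\sqrt{p_{P_s}p_{P_k}}$ (resp.\ $2\sqrt{p_{P_s}p_{\overline{P_s}}}$). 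Summing then gives
\begin{align*}
\sum_{k\ne s}C^2_{P_s\mid P_k}=4p_{P_s}\sum_{k\ne s}p_{P_k}=4p_{P_s}p_{\overline{P_s}}=C^2_{P_s\mid\overline{P_s}},
\end{align*}
and the same identity with $C^a$ in place of $C$ closes the chain in (\ref{c1}).

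The main obstacle is justifying the collapse to an effective two-qubit $X$-state. One must verify that every pure-state decomposition of $\rho_S$ lies inside $\mathrm{span}\{\ket{\alpha_S},\ket{\mathbf 0}_S\}$ (a range argument), and that the map $\ket{\mathbf 0}_X\mapsto\ket{0}_{\mathrm{qb}},\ \ket{u_X}_X\mapsto\ket{1}_{\mathrm{qb}}$ is a genuine tensor-product local isometry on $P_s\otimes\overline{P_s}$, which hinges on the single-party orthogonality $\braket{\mathbf 0|u_X}=0$. Once this block structure is exposed, the equality $C=C^a$ on each pair is automatic: on a rank-two state whose only separable pure component is $\ket{\mathbf 0}$, the Wootters spectrum of $\rho\tilde\rho$ has only one nonzero entry, so the minimum and maximum formulas necessarily agree.
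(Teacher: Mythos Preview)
The paper does not supply a proof of this lemma; it is quoted verbatim from \cite{san2008generalized} and used as a black box. So there is no ``paper's own proof'' to compare against here.

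Your argument is correct and is essentially the standard route to this identity. The key structural facts you isolate---that $\rho_S=\ketbra{\alpha_S}{\alpha_S}+(1-p_S)\proj{\mathbf 0}_S$ is rank two, that its range sits inside $\mathrm{span}\{\ket{\mathbf 0}_{P_s},\ket{u_{P_s}}\}\otimes\mathrm{span}\{\ket{\mathbf 0}_{\overline{P_s}},\ket{u_{\overline{P_s}}}\}$, and that the product local isometry onto two qubits therefore preserves both the convex-roof and concave-roof concurrences---are exactly what drives the Kim--Sanders computation. The Wootters spectrum of the resulting $X$-state indeed has the single nonzero eigenvalue $4p_{P_s}p_{\overline{P_s}}$, forcing $C=C^a=2\sqrt{p_{P_s}p_{\overline{P_s}}}$, and the additive identity $\sum_{k\ne s}p_{P_k}=p_{\overline{P_s}}$ then yields (\ref{c1}). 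The two ``obstacles'' you flag are handled by your own observations: the range argument is immediate from $\mathrm{rank}\,\rho_S=2$, and the isometry is well defined because $\braket{\mathbf 0}{u_X}=0$ follows from the $j\ge 1$ restriction in (\ref{f3}).
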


Based on Lemma \ref{l1} and the relations between $\mathbf{T}_{q}(\cdot)$ and $C(\cdot)$, the authors in \cite{shi2020monogamy} showed that 
\begin{Lemma}\label{l6}
	Assume $\rho_{A_{j_1}A_{j_2}\cdots A_{j_m}}$ is a reduced density matrix of an $n$-qudit GW state in (\ref{f3}), here $m\le n$, then we have
	\begin{align*}
\mathbf{T}_q(\rho_{A_{j_1}|A_{j_2}\cdots A_{j_m}})=f_{q}(C^2(\rho_{A_{j_1}|A_{j_2}\cdots A_{j_m}})),
	\end{align*}
	for $q\in[\frac{5-\sqrt{13}}{2},\frac{5+\sqrt{13}}{2}]$.
\end{Lemma}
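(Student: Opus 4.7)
The strategy is to reduce the computation for the mixed state on $R=\{A_{j_1},\ldots,A_{j_m}\}$ to the pure-state identity $\mathbf{T}_q(\ket{\phi})=f_q(C^2(\ket{\phi}))$ recalled earlier for vectors on $\mathcal{H}_2\otimes\mathcal{H}_d$, and then to close the remaining gap by combining the convexity of $g_q(x):=f_q(x^2)$ with an explicit optimal decomposition.

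First I would describe the range of $\rho_R$ via the structure of the GW state in (\ref{f3}). Splitting $\ket{W_n^d}$ according to whether the single ``excitation'' sits inside $R$ or inside its complement $B$, one can write
\begin{align*}
\ket{W_n^d}_{RB}=\ket{\phi}_R\ket{0\cdots 0}_B+\ket{0\cdots 0}_R\ket{\chi}_B,
\end{align*}
where $\ket{\phi}_R$ is a GW-type vector on $R$, $\ket{\chi}_B$ lives in the excitation sector on $B$, and in particular $\braket{0\cdots 0}{\chi}_B=0$. Tracing out $B$ therefore yields $\rho_R=\proj{\phi}_R+\|\chi\|^2\proj{0\cdots 0}_R$, whose range is contained in the two-dimensional span of $\ket{\phi}_R$ and $\ket{0\cdots 0}_R$. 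A direct inspection shows every element of this span has the form $\ket{x}_{A_{j_1}}\ket{0\cdots 0}_{\mathrm{rest}}+\ket{0}_{A_{j_1}}\ket{y}_{\mathrm{rest}}$ with $\braket{0}{x}=0$, hence its Schmidt rank across the cut $A_{j_1}\,|\,A_{j_2}\cdots A_{j_m}$ is at most two. Every pure state in any decomposition of $\rho_R$ is therefore effectively a state of some $\mathcal{H}_2\otimes\mathcal{H}_{d'}$ system, so the pure-state identity from the preliminaries applies componentwise.

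For the lower bound I would invoke the convexity and monotonicity of $g_q$ on $[0,1]$, which holds on the stated $q$-range by the second preliminary lemma. For any decomposition $\rho_R=\sum_i p_i\proj{\psi_i}$,
\begin{align*}
\sum_i p_i\mathbf{T}_q(\ket{\psi_i})=\sum_i p_i g_q(C(\ket{\psi_i}))\ge g_q\!\Bigl(\sum_i p_iC(\ket{\psi_i})\Bigr)\ge g_q(C(\rho_R)),
\end{align*}
where the final step uses the monotonicity of $g_q$ together with the definition of $C(\rho_R)$ as the infimum of the linear combinations $\sum_i p_i C(\ket{\psi_i})$. Minimizing the left-hand side over decompositions yields $\mathbf{T}_q(\rho_R)\ge f_q(C^2(\rho_R))$.

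The delicate direction, and in my view the main obstacle, is the matching upper bound. My plan is to exhibit a single decomposition $\rho_R=\sum_i p_i\proj{\psi_i}$ that is simultaneously optimal for the concurrence and ``flat'' in the sense that $C(\ket{\psi_i})=C(\rho_R)$ for every $i$. Since $\rho_R$ is rank-two and effectively lives in a $\mathcal{H}_2\otimes\mathcal{H}_{d'}$ subspace, a Wootters-type construction, of the kind already used in the monogamy analysis of GW reduced states cited just before this lemma, should deliver such a decomposition. On this special choice the convex bound above collapses to an equality, yielding $\mathbf{T}_q(\rho_R)\le\sum_i p_i f_q(C^2(\ket{\psi_i}))=f_q(C^2(\rho_R))$, which together with the lower bound gives the lemma.
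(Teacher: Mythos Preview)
The paper does not actually supply its own proof of this lemma: it is imported verbatim from \cite{shi2020monogamy}, with only the one-line pointer ``Based on Lemma~\ref{l1} and the relations between $\mathbf{T}_q(\cdot)$ and $C(\cdot)$''. Your plan is precisely the argument that underlies that cited result, so in that sense you are reproducing rather than diverging from the intended proof: (i) the range of $\rho_R$ is two-dimensional and every vector in it has Schmidt rank at most two across $A_{j_1}\,|\,A_{j_2}\cdots A_{j_m}$, so the pure-state identity $\mathbf{T}_q=f_q\!\circ C^2$ applies termwise; (ii) the convexity of $x\mapsto f_q(x^2)$ on the stated $q$-interval (this is Lemma~2 in the paper, and is the lemma that actually fixes the range $[\tfrac{5-\sqrt{13}}{2},\tfrac{5+\sqrt{13}}{2}]$, despite the paper's loose pointer to Lemma~1) yields $\mathbf{T}_q\ge f_q(C^2)$; (iii) a flat optimal decomposition gives the reverse inequality.

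The only place where your write-up is still a ``should'' rather than a proof is step (iii). You can close it cleanly by observing that the situation is not merely $\mathcal{H}_2\otimes\mathcal{H}_{d'}$ but effectively $\mathcal{H}_2\otimes\mathcal{H}_2$: in your notation every vector in the range has the form $\alpha\ket{x}\ket{0\cdots 0}+\ket{0}\bigl(\alpha\ket{z}+\beta\ket{0\cdots 0}\bigr)$, so the second factor is supported on the fixed two-dimensional span of $\ket{0\cdots 0}_{\mathrm{rest}}$ and the GW-type vector $\ket{z}_{\mathrm{rest}}$. Hence $\rho_R$ is, up to local isometries, a rank-two two-qubit state, and Wootters' optimal decomposition---in which every component has the same concurrence---exists and gives $\sum_i p_i f_q(C^2(\ket{\psi_i}))=f_q(C^2(\rho_R))$ on the nose. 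With that sharpening your argument is complete and matches the approach of the cited reference.
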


\section{Main Results}\label{sec3}
This section mainly considers classes of entanglement polygon inequalities in terms of the Tsallis-$q$ entanglement for a class of mixed states, which are the reduced density matrices of a GW state. First, we show that the EPI is satisfied by the class of mixed states. Then a tighter EPI is presented for the class of mixed states. At last, we present a generalized EPI for the mixed states, this inequality can be regarded as a relationship of a bipartite entanglement.

\begin{Theorem}\label{t1}
	Assume $\rho_{A_{j_1}A_{j_2}A_{j_3}}$ is a tripartite reduced density matrix of an $n$-qudit GW state in $\mathcal{H}_{A_1}\otimes\mathcal{H}_{A_2}\otimes\cdots\mathcal{H}_{A_n}$, then
	\begin{align*}
\mathbf{T}_q(\rho_{A_{j_1}|A_{j_2}A_{j_3}})\le \mathbf{T}_q(\rho_{A_{j_2}|A_{j_1}A_{j_3}})+\mathbf{T}(\rho_{A_{j_3}|A_{j_1}A_{j_2}}),
	\end{align*}
	when $q\in[\frac{5-\sqrt{13}}{2},2]\cup[3,\frac{5+\sqrt{13}}{2}].$
\end{Theorem}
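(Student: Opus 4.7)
The plan is to reduce the polygon inequality to a functional inequality about $f_q$ evaluated at squared concurrences of pairwise reductions, and then exploit the shape properties of $f_q$ supplied by Lemma \ref{l1}. The three ingredients are Lemma \ref{l6} (to rewrite each Tsallis-$q$ entanglement as $f_q(C^2)$), Lemma \ref{l4} (to decompose the one-versus-two squared concurrences as a sum of pair squared concurrences), and Lemma \ref{l1} (for the monotonicity and concavity of $f_q$ on the stated $q$-range).

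First I would apply Lemma \ref{l6} to each of the three terms, writing $\mathbf{T}_q(\rho_{A_{j_i}|A_{j_k}A_{j_l}})=f_q\bigl(C^2(\rho_{A_{j_i}|A_{j_k}A_{j_l}})\bigr)$ for the appropriate index choices. Next I would use equation (\ref{c1}) of Lemma \ref{l4} to expand the three one-versus-two squared concurrences as sums of pairwise squared concurrences. Introducing the abbreviations
\begin{align*}
x=C^2(\rho_{A_{j_1}|A_{j_2}}),\quad y=C^2(\rho_{A_{j_1}|A_{j_3}}),\quad z=C^2(\rho_{A_{j_2}|A_{j_3}}),
\end{align*}
the theorem then reduces to the functional inequality $f_q(x+y)\le f_q(x+z)+f_q(y+z)$.

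To dispatch this functional inequality I would observe two things about $f_q$ on the specified $q$-range. A direct substitution shows $f_q(0)=0$, and Lemma \ref{l1} provides concavity on $[0,1]$; together these yield subadditivity $f_q(a+b)\le f_q(a)+f_q(b)$ whenever $a,b,a+b\in[0,1]$. Lemma \ref{l1} also gives monotonicity. Combining them gives the chain
\begin{align*}
f_q(x+y)\le f_q(x)+f_q(y)\le f_q(x+z)+f_q(y+z),
\end{align*}
where the first step uses subadditivity and the second applies monotonicity in each summand since $z\ge 0$.

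The main obstacle I expect is the domain check: I must ensure every argument fed to $f_q$ lies in $[0,1]$ so that Lemma \ref{l1} is applicable. This amounts to verifying that each of $x+y$, $x+z$, $y+z$ is at most $1$, and these are precisely the three squared concurrences $C^2(\rho_{A_{j_i}|A_{j_k}A_{j_l}})$ that already appear in Lemma \ref{l6}. The bound follows from the effective two-level structure carried by reductions of a GW state in (\ref{f3}); once this is in place the remaining argument is a short combination of concavity and monotonicity estimates.
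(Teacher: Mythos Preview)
Your proposal is correct and follows essentially the same route as the paper: both arguments apply Lemma~\ref{l6} and Lemma~\ref{l4} to reduce the claim to $f_q(x+y)\le f_q(x+z)+f_q(y+z)$, and then invoke Lemma~\ref{l1}. The only difference is that the paper compresses the functional step into a single citation of Lemma~\ref{l1}, whereas you (helpfully) make explicit that it uses $f_q(0)=0$ plus concavity to get subadditivity, followed by monotonicity, and you also address the domain check that the paper leaves implicit.
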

\begin{proof}
	As $\rho_{A_{j_1}A_{j_2}A_{j_3}}$ is a tripartite reduced density matrix of a GW state, then 
	\begin{align*}
	&\mathbf{T}_{q}(\rho_{A_{j_2}|A_{j_1}A_{j_3}})+\mathbf{T}_{q}(\rho_{A_{j_3}|A_{j_1}A_{j_2}})\nonumber\\
	=&f_{q}(C^2_{A_{j_2}|A_{j_1}A_{j_3}})+f_{q}(C^2_{A_{j_3}|A_{j_2}A_{j_1}})\\
	=&f_q(C^2_{A_{j_2}A_{j_1}}+C^2_{A_{j_2}A_{j_3}})+f_q(C^2_{A_{j_3}A_{j_2}}+C^2_{A_{j_3}A_{j_1}})\\
	\ge&f_q(C^2_{A_{j_1}A_{j_2}}+C^2_{A_{j_1}A_{j_3}})\\
	=&f_q(C^2_{A_{j_1}|A_{j_2}A_{j_3}})\\
	=&\mathbf{T}_q(\rho_{A_{j_1}|A_{j_2}A_{j_3}}),
	\end{align*} 
	here the first equality is due to the Lemma \ref{l6}, the second equality is due to Lemma \ref{l4}, the first inequality is due to the Lemma \ref{l1}, and the last equality is due to Lemma \ref{l4}.
\end{proof}

Due to the proof of Theorem \ref{t1}, we can prove the following similarly, 
\begin{align*}
\mathbf{T}_q(\rho_{A_{j_2}|A_{j_1}A_{j_3}})\le \mathbf{T}_q(\rho_{A_{j_1}|A_{j_2}A_{j_3}})+\mathbf{T}(\rho_{A_{j_3}|A_{j_1}A_{j_2}}),
\end{align*}
that is,
\begin{Corollary}\label{c3}
		Assume $\rho_{A_{j_1}A_{j_2}A_{j_3}}$ is a tripartite reduced density matrix of an $n$-qudit GW state in $\mathcal{H}_{A_1}\otimes\mathcal{H}_{A_2}\otimes\cdots\mathcal{H}_{A_n}$, then
	\begin{align*}
|\mathbf{T}_q(\rho_{A_{j_2}|A_{j_1}A_{j_3}})-\mathbf{T}_q(\rho_{A_{j_3}|A_{j_1}A_{j_2}})|\le&\\	\mathbf{T}_q(\rho_{A_{j_1}|A_{j_2}A_{j_3}})\le \mathbf{T}_q(\rho_{A_{j_2}|A_{j_1}A_{j_3}})+&\mathbf{T}(\rho_{A_{j_3}|A_{j_1}A_{j_2}}),
	\end{align*}
	when $q\in[\frac{5-\sqrt{13}}{2},2]\cup[3,\frac{5+\sqrt{13}}{2}].$
\end{Corollary}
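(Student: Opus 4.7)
The plan is to derive Corollary \ref{c3} as an immediate consequence of Theorem \ref{t1} combined with the observation that the argument used to prove Theorem \ref{t1} is symmetric under arbitrary permutation of the three indices $j_1, j_2, j_3$. The right-hand inequality is literally the statement of Theorem \ref{t1}, so no further work is required there; all the content is in the left-hand inequality.

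To handle the left-hand inequality, I would first unfold the absolute value into the pair
\begin{align*}
\mathbf{T}_q(\rho_{A_{j_2}|A_{j_1}A_{j_3}}) &\le \mathbf{T}_q(\rho_{A_{j_1}|A_{j_2}A_{j_3}}) + \mathbf{T}_q(\rho_{A_{j_3}|A_{j_1}A_{j_2}}),\\
\mathbf{T}_q(\rho_{A_{j_3}|A_{j_1}A_{j_2}}) &\le \mathbf{T}_q(\rho_{A_{j_1}|A_{j_2}A_{j_3}}) + \mathbf{T}_q(\rho_{A_{j_2}|A_{j_1}A_{j_3}}).
\end{align*}
Each of these is a triangle-type polygon inequality with a \emph{different} party playing the role of the distinguished ``single'' subsystem. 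The first is obtained by cyclically relabeling $(j_1, j_2, j_3) \mapsto (j_2, j_3, j_1)$ in the statement of Theorem \ref{t1}, and the second by $(j_1, j_2, j_3) \mapsto (j_3, j_1, j_2)$.

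The key point is that the three ingredients used in the proof of Theorem \ref{t1} are all symmetric in the party labels of a tripartite reduced GW state: Lemma \ref{l6} expresses $\mathbf{T}_q(\rho_{A_{j_s}|A_{j_k}A_{j_l}})$ as $f_q(C^2_{A_{j_s}|A_{j_k}A_{j_l}})$ for every choice of distinguished index $s$; Lemma \ref{l4}, via identity (\ref{c1}), splits $C^2_{A_{j_s}|A_{j_k}A_{j_l}}$ as $C^2_{A_{j_s}A_{j_k}} + C^2_{A_{j_s}A_{j_l}}$ for every $s$; and Lemma \ref{l1} provides the monotonicity and concavity of $f_q$ on the allowed $q$-range independently of which party was chosen. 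Therefore repeating the exact four-line calculation in the proof of Theorem \ref{t1} with the distinguished party replaced by $A_{j_2}$, and then by $A_{j_3}$, produces the two displayed inequalities above. Combining them gives the desired bound on the absolute difference.

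I do not anticipate any genuine obstacle: the corollary is essentially a bookkeeping consequence of the permutation invariance of the hypotheses and conclusions used in Theorem \ref{t1}. The only point one has to be attentive to is that the pairwise concurrences $C^2_{A_{j_s}A_{j_k}}$ in (\ref{c1}) are symmetric in their two arguments, so the additive decomposition carries over unchanged after relabeling, and the condition $q\in[\frac{5-\sqrt{13}}{2},2]\cup[3,\frac{5+\sqrt{13}}{2}]$ needed for Lemma \ref{l1} is preserved, since it does not depend on the labels of the parties.
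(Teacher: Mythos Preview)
Your proposal is correct and matches the paper's own approach: the paper simply remarks that the permuted inequality $\mathbf{T}_q(\rho_{A_{j_2}|A_{j_1}A_{j_3}})\le \mathbf{T}_q(\rho_{A_{j_1}|A_{j_2}A_{j_3}})+\mathbf{T}_q(\rho_{A_{j_3}|A_{j_1}A_{j_2}})$ follows ``similarly'' from the proof of Theorem \ref{t1}, and you have spelled out precisely why that is so via the permutation symmetry of Lemmas \ref{l1}, \ref{l4}, and \ref{l6}. Your unfolding of the absolute value into the two permuted triangle inequalities is exactly the intended argument.
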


\begin{figure}
	\centering
	% Requires \usepackage{graphicx}
	\includegraphics[scale=0.4]{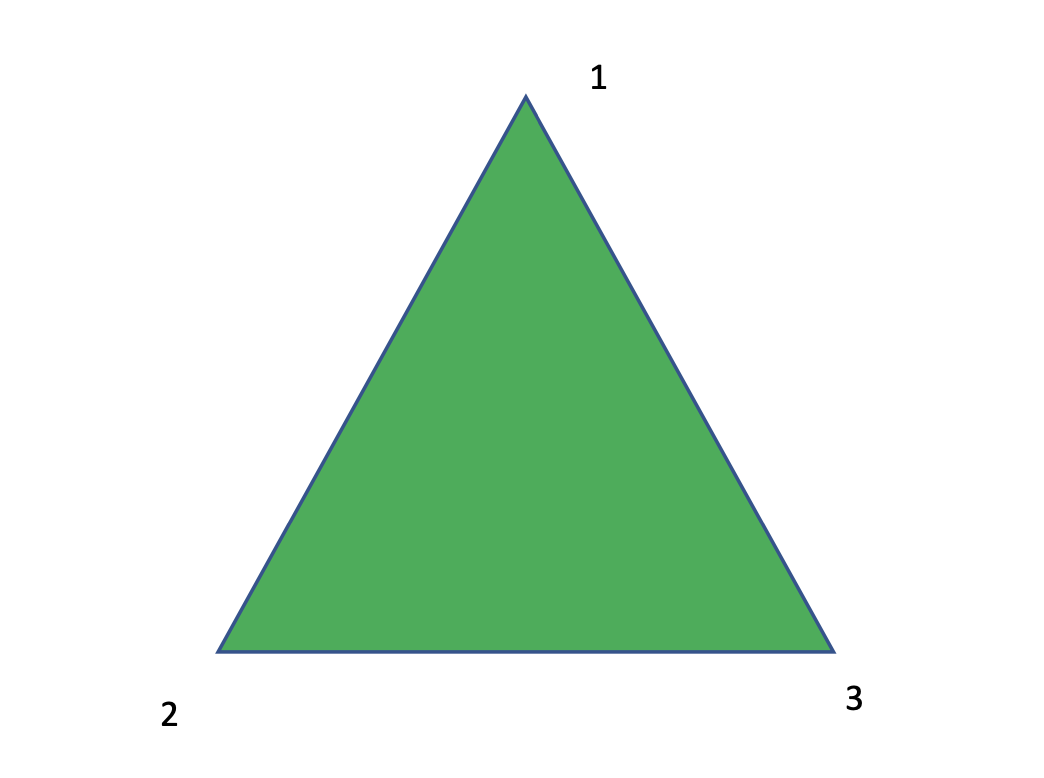}
	
	\captionsetup{justification=raggedright, singlelinecheck=false}
	
	\caption{Characterization of Corollary \ref{c3}. 
		In this figure, the length of segment $i-j$ indicates the value of $\mathbf{T}_{q}(\rho_{A_k|A_iA_j})$, here $k$ is another party different from parties $i$ and $j$.}\label{fig1}
\end{figure}

In Fig. \ref{fig1}, we plot a triangle to present the meaning of Corollary \ref{c3}, the length of segment $i-j$ denotes the entanglement of $\mathbf{T}_{q}(\rho_{A_k|A_iA_j})$, here $k$ is another party.

\begin{Corollary}
	Assume $\rho_{A_{j_1}A_{j_2}\cdots A_{j_m}}$ is a multipartite reduced density matrix of an $n$-qudit $GW$ state in $\mathcal{H}_{A_1}\otimes\mathcal{H}_{A_2}\otimes\cdots\otimes\mathcal{H}_{A_n}$, and $\{P_1,P_2,\cdots,P_h\}$ is a partition of the set $S=\{A_{j_1},A_{j_2},\cdots,A_{j_m}\}$, that is, $\cup_{g=1}^hP_g=S$,
	\begin{align*}
	\mathbf{T}_q(\rho_{P_{j_k}|\overline{P_{j_k}}})
	\le \sum_{l\ne k}\mathbf{T}_q(\rho_{P_{j_l}|\overline{P_{j_l}}}),
	\end{align*}
		here $\overline{P_{j_i}}=P_{j_0}P_{j_1}\cdots P_{j_{i-1}}P_{j_{i+1}}\cdots P_{j_m}$, $i=0,1,\cdots,m$, and $m\le n$.
\end{Corollary}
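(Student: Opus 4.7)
My plan is to replicate the tripartite argument of Theorem \ref{t1} one block at a time, now interpreting each $P_{j_k}$ as a composite party. All the heavy lifting has already been done in Lemmas \ref{l1}, \ref{l4}, and \ref{l6}; the inequality follows by chaining them in the same order used for $m=3$.

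First I would rewrite every Tsallis-$q$ term in the claim as $f_q$ of a squared concurrence, using the identity $\mathbf{T}_q(\rho_{P|\overline{P}})=f_q(C^2_{P|\overline{P}})$ from Lemma \ref{l6}, extended from single-party blocks to arbitrary partition blocks. Second, for each $l\neq k$, Lemma \ref{l4} gives the partition concurrence-polygon equality
$$C^2_{P_{j_l}|\overline{P_{j_l}}}=\sum_{r\neq l}C^2_{P_{j_l}|P_{j_r}}\ge C^2_{P_{j_l}|P_{j_k}}=C^2_{P_{j_k}|P_{j_l}},$$
where the last equality uses symmetry of concurrence. Monotonicity of $f_q$ on the allowed $q$-range (Lemma \ref{l1}) then produces $f_q(C^2_{P_{j_l}|\overline{P_{j_l}}})\ge f_q(C^2_{P_{j_k}|P_{j_l}})$. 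Finally, concavity of $f_q$ together with the direct check $f_q(0)=0$ forces subadditivity, namely $\sum_l f_q(x_l)\ge f_q(\sum_l x_l)$ whenever $x_l\in[0,1]$; combined with a last application of Lemma \ref{l4} on the right-hand side this gives
$$\sum_{l\neq k}\mathbf{T}_q(\rho_{P_{j_l}|\overline{P_{j_l}}})\ge f_q\!\Big(\sum_{l\neq k}C^2_{P_{j_k}|P_{j_l}}\Big)=f_q(C^2_{P_{j_k}|\overline{P_{j_k}}})=\mathbf{T}_q(\rho_{P_{j_k}|\overline{P_{j_k}}}).$$

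The main obstacle I anticipate is the very first step: Lemma \ref{l6} is stated with a single-party block $A_{j_1}$, whereas here $P_{j_k}$ can group several parties. One has to verify that the identification $\mathbf{T}_q=f_q(C^2)$ survives this regrouping, which amounts to observing that both the partial trace and the block merging preserve the Schmidt-rank-two structure inherited from the single shared excitation of the generalized W-class state. Once this structural remark is recorded, the remainder of the proof is pure bookkeeping using Lemmas \ref{l1} and \ref{l4}.
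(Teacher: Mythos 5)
The paper states this corollary without proof, as an immediate generalization of Theorem \ref{t1}, and your chain --- Lemma \ref{l6} to rewrite each term as $f_q(C^2)$, Lemma \ref{l4} for the additivity of squared concurrence over the partition, and monotonicity plus subadditivity of $f_q$ (concave with $f_q(0)=0$) from Lemma \ref{l1} --- is exactly that argument carried out for composite blocks. You also correctly isolate the one point that genuinely needs checking, namely that $\mathbf{T}_q=f_q(C^2)$ survives when a block $P_{j_k}$ groups several parties, and your justification via the Schmidt-rank-two structure of a GW state across any block bipartition (which also keeps $\sum_{l\ne k}C^2_{P_{j_k}|P_{j_l}}\le 1$, so $f_q$ stays in its domain) is the right one.
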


\indent Assume $j\in \mathbb{N}^{+}$, then $j$ can be written as
\begin{align}
j=\sum_{i=0}^{n-1}j_i 2^i,\label{tqa1}
\end{align}
here we assume $\log_2 j\le n,j_i\in \{0,1\}.$ According to the equality (\ref{tqa1}), we have the following bijection:
\begin{align}
j\rightarrow &\vec{j}\nonumber\\
j\rightarrow &(j_0,j_1,\cdots,j_{n-1}),\nonumber
\end{align} 
and we denote its Hamming distance $w_H(\vec{j})$ as the number 1 of the set $\{j_0,j_1,\cdots,j_{n-1}\}$. Next we present a tighter inequality of the GW states in terms of the Tsallis-$q$ entanglement.
\begin{Theorem}\label{tqa5}
	Let $\beta\in [0,1],$ assume $\rho_{A_{i_1}A_{i_2}\cdots A_{i_m}}$ is a reduced density matrix of a GW state $\ket{\psi}_{A_0A_1\cdots A_{n-1}}$. Let $P=\{P_{j_0},P_{j_1},\cdots,P_{j_{m-1}},P_{j_m}\}$ be a partition of the set $\{A_{i_1},A_{i_2},\cdots, A_{i_m}\}$, 	when $q\in[\frac{5-\sqrt{13}}{2},2]\cup[3,\frac{5+\sqrt{13}}{2}],$ there exists an appropriate order of $P_{j_0}, P_{j_1},\cdots,P_{j_{m-1}},P_{j_m}$ such that
	\begin{align}
	[\mathbf{T}_q(\rho_{P_{j_0}|\overline{P_{j_0}}})]^{\beta}\le \sum_{i=1}^{m}(2^{\beta}-1)^{w_H(\vec{j_i})}[\mathbf{T}_q(\rho_{P_{j_i}|\overline{P_{j_i}}})]^{\beta},
	\end{align}
	here $\overline{P_{j_i}}=P_{j_0}P_{j_1}\cdots P_{j_{i-1}}P_{j_{i+1}}\cdots P_{j_m}$, $i=0,1,\cdots,m$.
\end{Theorem}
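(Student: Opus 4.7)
The plan is to combine the polygon inequality for $\mathbf{T}_q$ on GW reduced states from the Corollary above with a dyadic iteration of a Kim-type real-variable power inequality, applied to the entanglement sums after sorting them.

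Writing $T_i := \mathbf{T}_q(\rho_{P_{j_i}|\overline{P_{j_i}}})$, I would first apply the preceding Corollary to the chosen partition $\{P_{j_0},\ldots,P_{j_m}\}$ to obtain
\begin{align*}
T_0 \le \sum_{i=1}^m T_i,
\end{align*}
and then pass to $\beta$-th powers using monotonicity of $x\mapsto x^\beta$ on $[0,\infty)$, giving $T_0^\beta \le \bigl(\sum_{i=1}^m T_i\bigr)^\beta$. The remaining task is to bound this power of a sum by the claimed weighted sum of $T_i^\beta$.

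The analytic tool is the elementary inequality
\begin{align*}
(x+y)^\beta \le x^\beta + (2^\beta - 1)\, y^\beta,\qquad \beta\in[0,1],\ 0\le y\le x,
\end{align*}
which I would verify by substituting $t = y/x\in[0,1]$, reducing the claim to $(1+t)^\beta \le 1 + (2^\beta-1)\, t^\beta$, and then checking this via endpoint evaluation and a single-variable calculus argument on the difference of the two sides. The ``appropriate order'' of the parts in the theorem is obtained by relabelling so that $T_1 \ge T_2 \ge \cdots \ge T_m$, appending zero entries if necessary so that the number of indices becomes a power of two $2^n$. I would then iterate the above inequality along a complete binary tree on the index set: at each internal node the current index block is split into its lower and upper halves; the sorted ordering guarantees that the lower-half sum dominates the upper-half sum termwise (and this property is inherited by each sub-block), so the inequality applies and contributes an extra factor of $(2^\beta - 1)$ only to the upper half.

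After $n$ levels of recursion, the coefficient multiplying each $T_i^\beta$ equals $(2^\beta-1)^k$, where $k$ counts the number of upper-half branches taken along the root-to-leaf path. Identifying each leaf with the binary string of its path (lower $= 0$, upper $= 1$), this path is precisely the binary expansion of the leaf's position $j_i$ in the sorted order, so $k = w_H(\vec{j_i})$; padded zero entries contribute nothing and can be dropped at the end. The main bookkeeping obstacle I anticipate is checking that the decreasing ordering hypothesis required for the Kim-type inequality is preserved at every internal node of the tree, but this follows by a short induction on depth, since restricting a decreasing sequence of length $2^n$ to either its lower or upper half yields a decreasing sequence of length $2^{n-1}$.
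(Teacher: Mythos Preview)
Your proposal is correct and follows essentially the same route as the paper: both apply the polygon inequality to get $T_0 \le \sum_{i\ge 1} T_i$, invoke the Kim-type lemma $(1+x)^\beta \le 1+(2^\beta-1)x^\beta$, sort the remaining $T_i$ in decreasing order, pad with zeros to a power of two, and run a dyadic halving recursion so that each leaf picks up a factor $(2^\beta-1)$ for every ``upper-half'' branch, which is exactly the Hamming weight of its binary position. Your write-up is in fact somewhat cleaner than the paper's, which interleaves the polygon step with the induction; your separation of the entanglement input (the Corollary) from the purely real-variable dyadic argument makes the structure more transparent, and your explicit remark that the decreasing order is inherited by each sub-block is a point the paper leaves implicit.
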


Here we place the proof of Theorem \ref{tqa5} in Sec. \ref{app}.

At last, we present the other class of entanglement polygon inequalities for bipartite entanglement. Assume $\ket{\phi}_{\mathbf{A}\mathbf{B}C}$ is a GW state in $\mathcal{H}_{\mathbf{A}}\otimes\mathcal{H}_{\mathbf{B}}\otimes\mathcal{H}_C,$ $\rho_{\mathbf{A}\mathbf{B}}=\tr_C\ket{\phi}_{\mathbf{A}\mathbf{B}C}\bra{\phi}$ is a mixed state of the Hilbert space $\mathcal{H}_{\mathbf{A}\mathbf{B}}$. Here we denote $\mathbf{A}=A_1A_2\cdots A_m$ and $\mathbf{B}=B_1B_2\cdots B_m.$

\begin{Theorem}
Assume $\ket{\phi}_{\mathbf{A}\mathbf{B}C}$ is a GW state in $\mathcal{H}_{\mathbf{A}\mathbf{B}C},$ $\rho_{\mathbf{A}\mathbf{B}}=\tr_C\ket{\phi}_{\mathbf{A}\mathbf{B}C}\bra{\phi}$, here $\mathbf{A}=A_1A_2\cdots A_m$ and $\mathbf{B}=B_1B_2\cdots B_m,$ then 
\begin{align}
\mathbf{T}_q^{\mathbf{A}|\mathbf{B}}\le \sum_{i,j=1}^m\mathbf{T}_q^{A_i|B_j},
\end{align} 
	when $q\in[\frac{5-\sqrt{13}}{2},2]\cup[3,\frac{5+\sqrt{13}}{2}].$
\end{Theorem}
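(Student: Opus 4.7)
The plan is to mirror the strategy used in the proof of Theorem \ref{t1}, reducing the claim to a statement about $f_q$ composed with squared concurrences. Three ingredients are needed: (i) Lemma \ref{l6}, which rewrites every Tsallis-$q$ entanglement arising here as $f_q$ of the corresponding squared concurrence (both $\rho_{\mathbf{A}\mathbf{B}}$ and each $\rho_{A_iB_j}$ are reduced density matrices of the same GW state, so the lemma applies); (ii) Lemma \ref{l4}, used twice with different partitions, to establish the additivity identity $C^2(\rho_{\mathbf{A}|\mathbf{B}})=\sum_{i,j}C^2(\rho_{A_i|B_j})$; and (iii) Lemma \ref{l1}, whose concavity of $f_q$ on $[0,1]$, together with the elementary check $f_q(0)=0$, yields subadditivity of $f_q$.

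For the additivity step, I would first apply Lemma \ref{l4} to $\rho_{\mathbf{A}\mathbf{B}}$ with the partition $\{\mathbf{A},B_1,\ldots,B_m\}$ of $S=\mathbf{A}\cup\mathbf{B}$, taking $P_s=\mathbf{A}$, to obtain $C^2(\rho_{\mathbf{A}|\mathbf{B}})=\sum_{j=1}^m C^2(\rho_{\mathbf{A}|B_j})$. Then, for each fixed $j$, I would apply Lemma \ref{l4} again to the reduced state $\rho_{\mathbf{A}B_j}$ with the partition $\{A_1,\ldots,A_m,B_j\}$, taking $P_s=B_j$ and invoking the symmetry $C^2_{B_j|A_i}=C^2_{A_i|B_j}$ of the concurrence; this gives $C^2(\rho_{\mathbf{A}|B_j})=\sum_{i=1}^m C^2(\rho_{A_i|B_j})$. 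Chaining the two identities yields the claimed additivity. To conclude, I invoke the subadditivity $f_q(s)\le\sum_k f_q(x_k)$ (with $s:=\sum_k x_k$) for nonnegatives $x_k$ with total in $[0,1]$, which follows from concavity anchored at $f_q(0)=0$ via the chord-from-the-origin inequality $f_q(x_k)\ge(x_k/s)f_q(s)$ summed over $k$; the range condition is ensured because the squared concurrence across any bipartition of a reduced GW state is at most one. Combining the three steps then gives $\mathbf{T}_q^{\mathbf{A}|\mathbf{B}}=f_q\bigl(\sum_{i,j}C^2(\rho_{A_i|B_j})\bigr)\le\sum_{i,j}f_q(C^2(\rho_{A_i|B_j}))=\sum_{i,j}\mathbf{T}_q^{A_i|B_j}$.

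The main obstacle is the additivity identity $C^2(\rho_{\mathbf{A}|\mathbf{B}})=\sum_{i,j}C^2(\rho_{A_i|B_j})$. Lemma \ref{l4} supplies it only after two successive applications with carefully chosen partitions, and one must check each time that the partition is a partition of a \emph{subset} of the parties of the ambient GW state so that the lemma's hypothesis is met. Once the identity is established, the remainder is the routine subadditivity argument for $f_q$, which is also the sole place where the restriction $q\in[\tfrac{5-\sqrt{13}}{2},2]\cup[3,\tfrac{5+\sqrt{13}}{2}]$ enters.
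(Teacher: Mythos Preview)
Your proposal is correct and follows exactly the same route as the paper: convert via Lemma~\ref{l6} to $f_q$ of squared concurrences, use Lemma~\ref{l4} (applied twice, as you spell out) to obtain the additivity $C^2_{\mathbf{A}|\mathbf{B}}=\sum_{i,j}C^2_{A_i|B_j}$, and then invoke the subadditivity of $f_q$ coming from concavity (Lemma~\ref{l1}) together with $f_q(0)=0$. Your write-up is in fact more explicit than the paper's, which compresses the additivity step into a single equality and leaves the subadditivity-from-concavity argument implicit.
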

\begin{proof}
As $\rho_{\mathbf{A}\mathbf{B}}$ is reduced density matrix of a GW state, then
\begin{align*}
\mathbf{T}_q(\rho_{\mathbf{A}\mathbf{B}})=&f_q(C^2_{\mathbf{A}|\mathbf{B}})\\
=&f_q(\sum_{i,j=1}^m C_{A_iB_j}^2)\\
\le&\sum_{i,j=1}^m f_q(C^2_{A_iB_j})\\
=&\sum_{i,j=1}^m T_q(\rho_{A_iB_j}).
\end{align*}
\end{proof}
\section{Summary}\label{sec 4}
In this manuscript, we have mainly considered the distribution property of multipartite entanglement for a class of mixed states in terms of the Tsallis-$q$ entanglement. First, we have proved the EPI in terms of Tsallis-$q$ entanglement for the reduced density matrices of the GW states. With Hamming distance, we have presented a tighter EPI in terms of Tsallis-$q$ entanglement for the mixed states. At last, we have shown inequalities for the mixed states in terms of Tsallis-$q$ entanglement. We hope our results can provide a reference for future work on the study of multiparty quantum entanglement.
\section{Acknowledgement}
This work was supported by the National Natural Science Foundation of China (Grant No. 12301580).
\bibliographystyle{IEEEtran}
\bibliography{ref}

% Generated by IEEEtran.bst, version: 1.14 (2015/08/26)
\begin{thebibliography}{10}
\providecommand{\url}[1]{#1}
\csname url@samestyle\endcsname
\providecommand{\newblock}{\relax}
\providecommand{\bibinfo}[2]{#2}
\providecommand{\BIBentrySTDinterwordspacing}{\spaceskip=0pt\relax}
\providecommand{\BIBentryALTinterwordstretchfactor}{4}
\providecommand{\BIBentryALTinterwordspacing}{\spaceskip=\fontdimen2\font plus
\BIBentryALTinterwordstretchfactor\fontdimen3\font minus
  \fontdimen4\font\relax}
\providecommand{\BIBforeignlanguage}[2]{{%
\expandafter\ifx\csname l@#1\endcsname\relax
\typeout{** WARNING: IEEEtran.bst: No hyphenation pattern has been}%
\typeout{** loaded for the language `#1'. Using the pattern for}%
\typeout{** the default language instead.}%
\else
\language=\csname l@#1\endcsname
\fi
#2}}
\providecommand{\BIBdecl}{\relax}
\BIBdecl

\bibitem{horodecki2009quantum}
R.~Horodecki, P.~Horodecki, M.~Horodecki, and K.~Horodecki, ``Quantum
  entanglement,'' \emph{Reviews of modern physics}, vol.~81, no.~2, pp.
  865--942, 2009.

\bibitem{plenio2014introduction}
M.~B. Plenio and S.~S. Virmani, ``An introduction to entanglement theory,'' in
  \emph{Quantum information and coherence}.\hskip 1em plus 0.5em minus
  0.4em\relax Springer, 2014, pp. 173--209.

\bibitem{bennett1993teleporting}
C.~H. Bennett, G.~Brassard, C.~Cr{\'e}peau, R.~Jozsa, A.~Peres, and W.~K.
  Wootters, ``Teleporting an unknown quantum state via dual classical and
  einstein-podolsky-rosen channels,'' \emph{Physical review letters}, vol.~70,
  no.~13, p. 1895, 1993.

\bibitem{bouwmeester1997experimental}
D.~Bouwmeester, J.-W. Pan, K.~Mattle, M.~Eibl, H.~Weinfurter, and A.~Zeilinger,
  ``Experimental quantum teleportation,'' \emph{Nature}, vol. 390, no. 6660,
  pp. 575--579, 1997.

\bibitem{harrow2004superdense}
A.~Harrow, P.~Hayden, and D.~Leung, ``Superdense coding of quantum states,''
  \emph{Physical review letters}, vol.~92, no.~18, p. 187901, 2004.

\bibitem{ekert1991quantum}
A.~K. Ekert, ``Quantum cryptography based on bell’s theorem,'' \emph{Physical
  review letters}, vol.~67, no.~6, p. 661, 1991.

\bibitem{mirhosseini2015high}
M.~Mirhosseini, O.~S. Magana-Loaiza, M.~N. O~Sullivan, B.~Rodenburg, M.~Malik,
  M.~P. Lavery, M.~J. Padgett, D.~J. Gauthier, and R.~W. Boyd,
  ``High-dimensional quantum cryptography with twisted light,'' \emph{New
  Journal of Physics}, vol.~17, no.~3, p. 033033, 2015.

\bibitem{coffman2000distributed}
V.~Coffman, J.~Kundu, and W.~K. Wootters, ``Distributed entanglement,''
  \emph{Physical Review A}, vol.~61, no.~5, p. 052306, 2000.

\bibitem{zong2022monogamy}
X.-L. Zong, H.-H. Yin, W.~Song, and Z.-L. Cao, ``Monogamy of quantum
  entanglement,'' \emph{Frontiers in Physics}, vol.~10, p. 880560, 2022.

\bibitem{christandl2004squashed}
M.~Christandl and A.~Winter, ``“squashed entanglement”: an additive
  entanglement measure,'' \emph{Journal of mathematical physics}, vol.~45,
  no.~3, pp. 829--840, 2004.

\bibitem{kim2010tsallis}
J.~S. Kim, ``Tsallis entropy and entanglement constraints in multiqubit
  systems,'' \emph{Physical Review A}, vol.~81, no.~6, p. 062328, 2010.

\bibitem{luo2016general}
Y.~Luo, T.~Tian, L.-H. Shao, and Y.~Li, ``General monogamy of tsallis q-entropy
  entanglement in multiqubit systems,'' \emph{Physical Review A}, vol.~93,
  no.~6, p. 062340, 2016.

\bibitem{gour2018monogamy}
G.~Gour and Y.~Guo, ``Monogamy of entanglement without inequalities,''
  \emph{Quantum}, vol.~2, p.~81, 2018.

\bibitem{guo2020multipartite}
Y.~Guo and L.~Zhang, ``Multipartite entanglement measure and complete monogamy
  relation,'' \emph{Physical Review A}, vol. 101, no.~3, p. 032301, 2020.

\bibitem{shi2021multilinear}
X.~Shi, L.~Chen, and M.~Hu, ``Multilinear monogamy relations for multiqubit
  states,'' \emph{Physical Review A}, vol. 104, no.~1, p. 012426, 2021.

\bibitem{jin2022new}
Z.-X. Jin, S.-M. Fei, X.~Li-Jost, and C.-F. Qiao, ``A new parameterized
  monogamy relation between entanglement and equality,'' \emph{Advanced Quantum
  Technologies}, vol.~5, no.~6, p. 2100148, 2022.

\bibitem{guo2023complete}
Y.~Guo and L.~Huang, ``Complete monogamy of multipartite quantum mutual
  information,'' \emph{Physical Review A}, vol. 107, no.~4, p. 042409, 2023.

\bibitem{qian2018entanglement}
X.-F. Qian, M.~A. Alonso, and J.~H. Eberly, ``Entanglement polygon inequality
  in qubit systems,'' \emph{New Journal of Physics}, vol.~20, no.~6, p. 063012,
  2018.

\bibitem{yang2022entanglement}
X.~Yang, Y.-H. Yang, and M.-X. Luo, ``Entanglement polygon inequality in qudit
  systems,'' \emph{Physical Review A}, vol. 105, no.~6, p. 062402, 2022.

\bibitem{shi2023entanglement}
X.~Shi, ``Entanglement polygon inequalities for pure states in qudit systems,''
  \emph{The European Physical Journal Plus}, vol. 138, no.~8, pp. 1--10, 2023.

\bibitem{xie2021triangle}
S.~Xie and J.~H. Eberly, ``Triangle measure of tripartite entanglement,''
  \emph{Physical Review Letters}, vol. 127, p. 040403, Jul 2021.

\bibitem{san2008generalized}
J.~San~Kim and B.~C. Sanders, ``Generalized w-class state and its monogamy
  relation,'' \emph{Journal of Physics A: Mathematical and Theoretical},
  vol.~41, no.~49, p. 495301, 2008.

\bibitem{choi2015negativity}
J.~H. Choi and J.~S. Kim, ``Negativity and strong monogamy of multiparty
  quantum entanglement beyond qubits,'' \emph{Phys. Rev. A}, vol.~92, p.
  042307, Oct 2015.

\bibitem{shi2020monogamy}
X.~Shi and L.~Chen, ``Monogamy relations for the generalized w-class states
  beyond qubits,'' \emph{Physical Review A}, vol. 101, no.~3, p. 032344, 2020.

\bibitem{lai2021tighter}
L.-M. Lai, S.-M. Fei, and Z.-X. Wang, ``Tighter monogamy and polygamy relations
  for a superposition of the generalized w-class state and vacuum,''
  \emph{Journal of Physics A: Mathematical and Theoretical}, vol.~54, no.~42,
  p. 425301, 2021.

\bibitem{li2024monogamy}
B.~Li, B.~Xie, Z.~Zhang, and H.~Fan, ``Monogamy and polygamy for the
  generalized w-class states using unified-(q, s) entropy,'' \emph{Science
  China Physics, Mechanics \& Astronomy}, vol.~67, no.~1, p. 210312, 2024.

\bibitem{Bennett1996Concentrating}
C.~H. Bennett, H.~J. Bernstein, S.~Popescu, and B.~Schumacher, ``Concentrating
  partial entanglement by local operations,'' \emph{Phys. Rev. A}, vol.~53, pp.
  2046--2052, Apr 1996.

\bibitem{Bennett1996mixed}
C.~H. Bennett, D.~P. DiVincenzo, J.~A. Smolin, and W.~K. Wootters,
  ``Mixed-state entanglement and quantum error correction,'' \emph{Phys. Rev.
  A}, vol.~54, pp. 3824--3851, Nov 1996.

\end{thebibliography}

\section{Appendix}\label{app}
\begin{Lemma}\label{l7}
	Let $\beta\in [0,1],x\in (0,1],$ then we have 
	\begin{align}
	(1+x)^{\beta}\le 1+(2^{\beta}-1)x^{\beta}.
	\end{align}
\end{Lemma}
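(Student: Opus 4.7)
The plan is to reduce the two-variable inequality to a one-dimensional monotonicity statement. First I would dispose of the boundary cases $\beta=0$ (both sides equal $1$) and $\beta=1$ (both sides equal $1+x$), so that it suffices to treat $\beta\in(0,1)$ with strict inequality possible in the interior.

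Next, since $x>0$, I would divide both sides by $x^{\beta}$ and substitute $t=1/x$. The constraint $x\in(0,1]$ corresponds to $t\in[1,\infty)$, and the inequality rewrites as
\begin{align*}
(t+1)^{\beta}-t^{\beta}\le 2^{\beta}-1,\qquad t\ge 1.
\end{align*}
Setting $\phi(t)=(t+1)^{\beta}-t^{\beta}$, I note that $\phi(1)=2^{\beta}-1$, so the desired bound becomes $\phi(t)\le\phi(1)$, i.e., $\phi$ is nonincreasing on $[1,\infty)$.

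To verify this, differentiate to obtain $\phi'(t)=\beta\bigl[(t+1)^{\beta-1}-t^{\beta-1}\bigr]$. Since $\beta-1\le 0$, the power map $s\mapsto s^{\beta-1}$ is nonincreasing on $(0,\infty)$, and $t+1>t>0$ immediately gives $\phi'(t)\le 0$. Hence $\phi$ is nonincreasing on $[1,\infty)$, which establishes the claim; reverting the substitution yields Lemma \ref{l7}.

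I do not anticipate a serious obstacle: the only mild care needed is the handling of the endpoint $\beta=1$, where $\phi$ becomes constant and the inequality degenerates to an equality, and the case $x=0$, which is attained by continuity. An equivalent route, perhaps closer in spirit to the rest of the paper, would be to define $g(x)=1+(2^{\beta}-1)x^{\beta}-(1+x)^{\beta}$, observe $g(0)=g(1)=0$, and perform a sign analysis on $g'$ to conclude $g\ge 0$ on $[0,1]$; I would present whichever version reads more cleanly alongside the surrounding Tsallis-entropy computations.
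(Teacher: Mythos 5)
Your proof is correct and follows essentially the same route as the paper's: the substitution $t=1/x$, the reduction to showing $f(t)=(1+t)^{\beta}-t^{\beta}$ is nonincreasing on $[1,\infty)$ via $f'(t)\le 0$, and evaluating at $t=1$ to get the bound $2^{\beta}-1$. Your treatment is in fact slightly more careful, since you spell out why $f'(t)\le 0$ (monotonicity of $s\mapsto s^{\beta-1}$ for $\beta\le 1$) and handle the endpoints $\beta\in\{0,1\}$ explicitly, whereas the paper asserts the sign of the derivative without justification.
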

\begin{proof}
	Let $t=\frac{1}{x},$ then the lemma is equivalent to get the maximum of $f(t)$ when t$\in[1,\infty),$
	\begin{align}
	f(t)=(1+t)^{\beta}-t^{\beta}.
	\end{align}
	As $t\in [1,\infty),$ and $f^{'}(t)\le 0,$ that is, when $t=1,$ $f(t)$ get the maximum $2^t-1.$ At last, When we replace t with $\frac{1}{x},$ we finish the proof. 
\end{proof}
 \textbf{Theorem} \ref{tqa5}: \emph{
	Let $\beta\in [0,1],$ assume $\rho_{A_{i_1}A_{i_2}\cdots A_{i_m}}$ is a reduced density matrix of a GW state $\ket{\psi}_{A_0A_1\cdots A_{n-1}}$, let $P=\{P_{j_0},P_{j_1},\cdots,P_{j_{m-1}},P_{j_m}\}$ be a partition of the set $\{A_{i_1},A_{i_2},\cdots, A_{i_m}\}$, then there exists an appropriate order of $P_{j_0}, P_{j_1},\cdots,P_{j_{m-1}},P_{j_m}$ such that
\begin{align}
[\mathbf{T}_q(\rho_{P_{j_0}|\overline{P_{j_0}}})]^{\beta}\le \sum_{i=1}^{m}(2^{\beta}-1)^{w_H(\vec{j_i})}[\mathbf{T}_q(\rho_{P_{j_i}|\overline{P_{j_i}}})]^{\beta},
\end{align}
here $\overline{P_{j_i}}=P_{j_0}P_{j_1}\cdots P_{j_{i-1}}P_{j_{i+1}}\cdots P_{j_m}$, $i=0,1,\cdots,m$.}

\begin{proof}
In the process of the proof, we always can order the partite $P_{j_0}, P_{j_1}$ $\cdots$ $P_{j_{m-1}}$ such that 
\begin{align}
\mathbf{T}_q(\rho_{P_{j_i}|\overline{P_{j_i}}})\ge \mathbf{T}_q(\rho_{P_{j_{i+1}}|\overline{P_{j_{i+1}}}}),i=0,1,\cdots,m-1.\label{tqa2}
\end{align}
Through computation, we have that all the reduced density matrices of a GW state is can be written as 

\begin{align*}
\rho_{P_{j_0}\cdots P_{j_{m-1}}}=\gamma_{P_{j_0}\cdots P_{j_{k-1}}}\otimes\ket{0_{m-k}}\bra{0_{m-k}}
\end{align*}

\indent Here we will use the mathematical induction method to prove the theorem. When $\rho_{P_{j_0}P_{j_1}P_{j_2}}$ is a tripartite reduced density matrix of a GW state  $\ket{\psi}_{AA_1\cdots A_{n-1}}$,
\begin{align}
& [\mathbf{T}_q(\rho_{P_{j_0}|P_{j_1}P_{j_2}})]^{\beta}\nonumber\\
\le &[\mathbf{T}_q(\rho_{P_{j_1}|P_{j_0}P_{j_2}})+\mathbf{T}_q(\rho_{P_{j_2}|P_{j_0}P_{j_1}})]^{\beta}\nonumber\\
= &[\mathbf{T}_q(\rho_{P_{j_1}|P_{j_0}P_{j_2}})]^{\beta}\left[1+[\frac{\mathbf{T}_q(\rho_{P_{j_2}|P_{j_0}P_{j_1}})}{\mathbf{T}_q(\rho_{P_{j_1}|P_{j_0}P_{j_2}})}]\right]^{\beta}\nonumber\\
\le &(\mathbf{T}_q(\rho_{P_{j_1}|P_{j_0}P_{j_2}}))^{\beta}+(2^{\beta}-1)(\mathbf{T}_q(\rho_{P_{j_2}|P_{j_0}P_{j_1}}))^{\beta},\label{tqa3}
\end{align}
here the first inequality is due to the Theorem \ref{t1}, and when $a>c>0,b>0,$ $a^b>c^b,$ and the second inequality is due to the Lemma \ref{l7}.\\
\indent Next assume $m<2^{n},$ the theorem is correct. Then when $m=2^n$, from the inequality (\ref{tqa3}), we have   

\begin{align}
&[\mathbf{T}_q(\rho_{P_{j_0}|\overline{P_{j_0}}})]^{\beta}\nonumber\\
\le&[\sum_{i=1}^{m/2-1}\mathbf{T}_q(\rho_{P_{j_i}|\overline{P_{j_i}}})]^{\beta}[1+\frac{\sum_{i=\frac{m}{2}}^{m-1}\mathbf{T}_q(\rho_{P_{j_i}|\overline{P_{j_i}}})}{\sum_{i=0}^{\frac{m}{2}-1}\mathbf{T}_q(\rho_{P_{j_i}|\overline{P_{j_i}}})}]^{\beta}\nonumber\\
\le &[\sum_{i=1}^{\frac{m}{2}-1}\mathbf{T}_q(\rho_{P_{j_i}|\overline{P_{j_i}}})]^{\beta}+(2^{\beta}-1)[\sum_{i=\frac{m}{2}}^{m-1}\mathbf{T}_q(\rho_{P_{j_i}|\overline{P_{j_i}}})]^{\beta}\nonumber\\
\le& \sum_{i=1}^{\frac{m}{2}-1}(2^{\beta}-1)^{w_H(\vec{j_i})}(\mathbf{T}_q(\rho_{P_{j_i}|\overline{P_{j_i}}}))^{\beta}\nonumber\\+&\sum_{i=\frac{m}{2}}^{m-1}(2^{\beta}-1)\times(2^{\beta}-1)^{w_H(\vec{j_i})-1}[\mathbf{T}_q(\rho_{P_{j_i}|\overline{P_{j_i}}})]^{\beta}\nonumber\\
\le& \sum_{i=1}^{m-1}(2^{\beta}-1)^{w_H(\vec{j_i})}[\mathbf{T}_q(\rho_{P_{j_i}|\overline{P_{j_i}}})]^{\beta} .\label{tqa4}
\end{align}

\indent When $m$ is an arbitrary number, we always can choose an $n\in \mathbb{N}^{+}$ such that $2^{n-1}\le m\le 2^n.$ Let 
\begin{align*}
\gamma_{P_{j_0}\cdots P_{j_{2^n-1}}}=\rho_{P_{j_0}P_{j_1}\cdots P_{j_m-1}}\otimes \delta_{P_{j_m}P_{j_{m+1}}\cdots P_{j_{2^n-1}}},
\end{align*}
here $$\delta_{P_{j_m}P_{j_{m+1}}\cdots P_{j_{2^n-1}}}=\ket{00\cdots 0}_{P_{j_m}P_{j_{m+1}}\cdots P_{j_{2^n-1}}}\bra{00\cdots 0}.$$	Then due to the inequality (\ref{tqa4}), we have 
\begin{align}
[\sum_{i=0}^{2^n-1}\mathbf{T}_q(\gamma_{P_{j_i}|\overline{P_{j_i}}})]^{\beta}\le \sum_{i=0}^{2^n-1}(2^{\beta}-1)^{w_H(\vec{j_i})}[\mathbf{T}_q(\gamma_{P_{j_i}|\overline{P_{j_i}}})]^{\beta}.
\end{align}

\indent From the definition of the state $\gamma_{P_{j_0}\cdots P_{j_{2^n-1}}}$, we have 
\begin{align}
\mathbf{T}_q(\gamma_{P_{j_i}|\overline{P_{j_i}}})=& \mathbf{T}_q(\rho_{P_{j_i}|\overline{P_{j_i}}}),\hspace{5mm} i=1,\cdots,m-1,\label{f4}\\
\mathbf{T}_q(\gamma_{P_{j_i}|\overline{P_{j_i}}})=& 0,\hspace{5mm} i=m,m+1,\cdots,2^n,\label{f5}
\end{align}
hence,
\begin{align}
&[\mathbf{T}_q(\gamma_{P_{j_i}|\overline{P_{j_i}}})]^{\beta}\nonumber\\
=& [\mathbf{T}_q(\rho_{P_{j_i}|\overline{P_{j_i}}})]^{\beta}\nonumber\\
\le& \sum_{i=0}^{2^n-1}(2^{\beta}-1)^{w_H(\vec{j_i})}[\mathbf{T}_q(\gamma_{P_{j_i}|\overline{P_{j_i}}})]^{\beta}\nonumber\\
=&\sum_{i=0}^{m-1}(2^{\beta}-1)^{w_H(\vec{j_i})}[\mathbf{T}_q(\rho_{P_{j_i}|\overline{P_{j_i}}})]^{\beta},
\end{align}
here the first equality is due to (\ref{f4}), the second equality is due to (\ref{f5})
\end{proof}

\end{document}